\setlist{nolistsep}
\newcolumntype{R}{>{\raggedleft\arraybackslash}X}
\newcolumntype{L}{>{\raggedright\arraybackslash}X}
\title{Heuristic algorithms for the min-max edge 2-coloring problem}
\author{Radu Stefan Mincu\inst{1} \and Alexandru Popa\inst{1,2,}\thanks{ \scriptsize This work was supported by the research programme PN 1819 ``Advanced IT resources to support digital transformation processes in the economy and society - RESINFO-TD'' (2018), project PN 1819-01-01 ``New research in complex systems modelling and optimization with applications in industry, business and cloud computing'', funded by the Ministry of Research and Innovation.}}
\institute{Department of Computer Science, University of Bucharest
\and 
National Institute for Research and Development in Informatics \\
E-mail: \email{mincu.radu@fmi.unibuc.ro, alexandru.popa@fmi.unibuc.ro} }
\begin{document}
\maketitle

\vspace{-0.3cm}

\begin{abstract}

In multi-channel Wireless Mesh Networks (WMN), each node is able to use multiple non-overlapping frequency channels. Raniwala et al. (MC2R 2004, \mbox{INFOCOM} 2005) propose and study several such architectures in which a computer can have multiple network interface cards. These architectures are modeled as a graph problem named \emph{maximum edge $q$-coloring} and studied in several papers by Feng et. al (TAMC 2007), Adamaszek and Popa (ISAAC 2010, JDA 2016). Later on Larjomaa and Popa (IWOCA 2014, JGAA 2015) define and study an alternative variant, named the \emph{min-max edge $q$-coloring}.

The above mentioned graph problems, namely the maximum edge \mbox{$q$-coloring} and the min-max edge \mbox{$q$-coloring} are studied mainly from the theoretical perspective. In this paper, we study the min-max edge \mbox{2-coloring} problem from a practical perspective. More precisely, we introduce, implement and test four heuristic approximation algorithms for the min-max edge $2$-coloring problem. These algorithms are based on a \emph{Breadth First Search} (BFS)-based heuristic and on \emph{local search} methods like basic \emph{hill climbing}, \emph{simulated annealing} and \emph{tabu search} techniques, respectively. Although several algorithms for particular graph classes were proposed by Larjomaa and Popa (e.g., trees, planar graphs, cliques, bi-cliques, hypergraphs), we design the first algorithms for general graphs. 

We study and compare the running data for all algorithms on Unit Disk Graphs, as well as some graphs from the DIMACS vertex coloring benchmark dataset. 

\end{abstract}

\section{Introduction}
\paragraph{\sc Motivation.}

In multi-channel Wireless Mesh Networks (WMN), each node is able to use multiple non-overlapping frequency channels. The use of many channels inside the same network can significantly improve overall performance. Interference from neighboring nodes can be decreased substantially  when nodes do not need to use the same radio channel for every link. Multiple radio channels in the network imply that at least some of the nodes need to handle more than one channel at a time. In many proposed designs the multi-channel feature is achieved by packet-by-packet reconfiguration of the radio \cite{muir1998,kyasanur2005,so2004}. However, one of the drawbacks of this kind of continuous channel switching of a single radio interface is that it requires precise synchronization throughout the network.

An alternative approach would be to fit multiple radio interfaces to each node, thus allowing a more persistent channel allocation per interface. A couple of such multi-NIC (network interface card) architectures have been proposed by Raniwala et al. \cite{RaniwalaGC04,RaniwalaC05}. Their simulation and testbed experiments show a promising improvement with only two NICs per node, compared to a single-channel WMN. Another appealing feature of these architectures is that they are based on readily available, commodity IEEE 802.11 interfaces, requiring only systems software modification. 

The scenario of two or more NICs per node with fixed channels imposes some limitations to the assignment of channels on each interface. In order to set up a link between two nodes, both of them have to have at least one of their interfaces set to the same channel. On the other hand, links inside an interference range should use as many different channels as possible. Thus, the channels need to be assigned carefully in order to both keep every required link possible and maximize useful bandwidth throughout the network.

\vspace{-0.2cm}

\paragraph{\sc Problem definition}

The channel assignment problem can be modeled as a type of edge coloring problem: given a graph $G$, the edges have to be colored so that there are at most $q$ different colors incident to each vertex. Here, vertices, edges and colors represent network nodes, links and channels, respectively. A coloring that satisfies this constraint, is called an \emph{edge $q$-coloring}. Note, that the coloring constraint differs from the traditional coloring problems, where adjacent items are not allowed to have the same color. Also the goal is different; instead of minimizing, we want to maximize the number of different colors in an edge $q$-coloring.

Initially, the channel assignment was formulated as the \emph{maximum edge $q$-coloring problem}, where the goal was to maximize the total number of colors in a $q$-coloring. The drawback of this model is that in an optimal solution the same color is assigned to many edges while other colors are used only once. We remind the reader that in the wireless mesh network setting, having the same color assigned to many edges is equivalent to having the same frequency used many times, and therefore, having interference. Since the goal of the application is to minimize the interference, max edge $q$-coloring is perhaps not the ideal theoretical formulation (although max edge $q$-coloring is still interesting as a combinatorial problem).  Instead, it is more relevant for the network application to try to have the color components as balanced as possible. Thus, the \emph{min-max edge $q$-coloring} had been introduced, where the goal is to minimize the maximum size of a color group. The formal definition of the min-max $q$-coloring follows.

\vspace{-0.1cm}

\begin{problem}[Min-max edge $q$-coloring]
\label{problem:minmax}
Given a graph $G=(V,E)$, find an edge $q$-coloring $\sigma$ of $G$ such that the amount $max_c|\{e \in E | \sigma(e)=c\}|$ is minimized. In other words, find an edge coloring that minimizes the size of the largest set of edges with the same color.
\end{problem}

\paragraph{\sc Previous work.} The problem of finding a maximum edge $q$-coloring of a given graph has been first studied by Feng et al.~\cite{FengZQW07,FengCZ08,FengZW09}. They provide a \mbox{$2$-approximation} algorithm for $q=2$ and a $(1+\frac{4q-2}{3q^2-5q+2})$-approximation for $q > 2$. They show that the problem is solvable in polynomial time for trees and complete graphs in the case $q=2$, but the complexity for general graphs has been left as an open problem. Later, Adamaszek and Popa~\cite{AP10} show that the problem is APX-hard and present a $5/3$-approximation algorithm for graphs which have a perfect matching. The maximum edge $q$-coloring is also considered in combinatorics and is a particular case of the anti-Ramsey number. For a brief description of the connection of the two problems, the reader can refer to~\cite{AP10}.

Larjomaa and Popa \cite{larjomaa2014min,LarjomaaP15}  introduce and study the min-max edge $q$-coloring problem. They prove that the problem is NP-hard for any $q \ge 2$ and show an exact polynomial time algorithm for trees, for $q=2$. Moreover, Larjomaa and Popa~\cite{LarjomaaP15} analyze the value of the optimal solution on special classes of graphs: cliques, bicliques and hypercubes. They provide the exact formulas of the optimal solutions for cliques. For bicliques they present a lower bound which is tight when both parts of the graph have an even number of vertices (and almost tight for the other cases). For a hypergraph $Q_n$ they give a lower bound which is tight for even $n$, and similarly, almost tight for odd $n$. Although these classes of graphs have a very simple structure, finding lower bounds is much more difficult than in the case of the max edge $q$-coloring problem.

A good lower bound of the optimal solution is necessary in order to design approximation algorithms. For the min-max edge $q$-coloring problem, a trivial lower bound is half of the maximum degree. Larjomaa and Popa~\cite{LarjomaaP15} show another lower bound in terms of the average degree of the graph. Larjomaa and Popa~\cite{LarjomaaP15} also present an approximation algorithm for planar graphs which achieves a sublinear approximation ratio. The algorithm uses a theorem of Lipton and Tarjan~\cite{Lipton1980} which says that a planar graph admits a small balanced separator.
   
\vspace{-8pt}

\paragraph*{\sc Our results.}
Although the min-max $q$-coloring problem has been studied for particular classes of graphs, little has been done for general graphs in the sense of an approximation algorithm. 
As such, we design, implement and analyze algorithms for the min-max 2-coloring problem for general graphs.

The paper is organized as follows. In Section \ref{section:bfs} we show a Breadth First Search (BFS)-inspired approach to approximating min-max 2-coloring. In Section~\ref{section:localsearch} we present min-max $q$-coloring as a local search problem  in the context of combinatorial optimization. Subsequently, we build the necessary tools to tackle min-max edge 2-coloring as a local search problem (provide neighborhood structure, auxiliary objective function). After this framework is built, we construct algorithms to solve the problem using hill climbing (its basic nature led to omitting the full algorithm from this paper), simulated annealing (Subsection \ref{section:sa}) and tabu search (Subsection \ref{section:ts}) techniques. Finally, we reveal some experimental results in Section \ref{section:experiments} and provide insight into the difficulty of the problem and the nature of the methods we employ to solve it.
We reveal a simple design for a BFS-inspired algorithm that yields good results while having the benefit of the linear time complexity of BFS.
We provide evidence that all of our \mbox{local} search algorithms successfully exploit the search space gradient in improving their working solutions as shown by a linear decrease in the objective function.
We show that a simple hill climbing approach produces reasonably good solutions using a low number of iterations over the initial solution. Algorithms \ref{alg:annealing} and \ref{alg:tabu} (based on simulated annealing and tabu search techniques) take longer to complete but manage to escape local optima and achieve better solutions.

In the Experimental Results (Section \ref{section:experiments}) we describe the testing dataset, analyze the implementation of the local search algorithms and show the behavior of the described algorithms on our selected dataset. The results are encouraging while considering  the upper bounds for the optimum solutions for a selection of the input graphs that are obtained with an Integer Linear Program (ILP).

\section{A BFS-inspired heuristic algorithm}
\label{section:bfs}

We show a simple algorithm for approximating the min-max edge \mbox{2-coloring} by using \emph{Breadth First Search} (BFS). The idea is to color the uncolored edges incident to each subsequent ``level'' in a BFS with a distinct color. The ``levels'' denote the starting vertex, then its neighbors, then the neighbors of the neighbors and so on. The full algorithm is presented as Algorithm \ref{alg:bfs}. The algorithm takes time $O(n+m)$, same as BFS. We can improve the base algorithm by coloring disconnected colored components with distinct colors as shown in \mbox{step 5}. By using a disjoint set forest data structure we may quickly determine these disconnected components during the edge coloring step for only a small overhead of $O(\alpha(m))$, $\alpha$ denoting the inverse of the Ackermann function.

\begin{algorithm}[!hbt]
\caption{input: graph $G=(V,E)$, an initial vertex $v_0$}
\label{alg:bfs}
$1:$ Let there be two sets $Q_1 \leftarrow \{v_0\}$ and $Q_2 \leftarrow \emptyset$. Mark $v_0$ as visited. Integer $c \leftarrow 1$.\\
$2:$ Color all uncolored edges $(v_i,v_j)$ incident to each $v_i \in Q_1$ using integer color $c$. \\
$3:$ During the previous operation, add all unvisited $v_j$ (neighbors of $v_i$) to set $Q_2$.\\
$4:$ Mark all these $v_j$ as visited.\\
$5:$ (Improvement step) Consider the subgraph containing all the edges colored with integer $c$. Color each disconnected component in this subgraph with a new color obtained by incrementing $c$.\\
$6:$ Let $c \leftarrow c + 1$, $Q_1 \leftarrow Q_2$ and $Q_2 \leftarrow \emptyset$ \\
$7:$ If $Q_1=\emptyset$ then the algorithm terminates. Else, continue with step $2$. \\

\end{algorithm}

\begin{theorem}
Algorithm \ref{alg:bfs} produces a valid 2-coloring.
\end{theorem}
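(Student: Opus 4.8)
The plan is to use the fact that a valid edge $2$-coloring is precisely one in which at most two distinct colors are incident to any single vertex, and to verify this invariant directly from the BFS structure. First I would assign to each reachable vertex $v$ its BFS level $\ell(v)$, namely its distance from $v_0$, and observe that the frontier set $Q_1$ during the iteration that uses color $c$ is exactly the set of vertices at level $c-1$ (this matches the initialization $Q_1 = \{v_0\}$ with $c=1$ and is maintained by Step~6). The one structural fact about BFS that I would invoke is that every edge joins two vertices whose levels differ by at most one.

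Next I would track, for an arbitrary vertex $v$ at level $k$, the colors assigned to its incident edges, splitting the analysis by the level of the opposite endpoint. An edge from $v$ to a level-$(k-1)$ vertex is still uncolored only while the frontier equals level $k-1$, so it receives color $k$; an edge from $v$ to a same-level or level-$(k+1)$ vertex is first reached when the frontier equals level $k$, so it receives color $k+1$; and edges to level $k-2$ or below cannot exist by the difference-at-most-one property. Hence, setting Step~5 aside, the only colors incident to $v$ are $k$ and $k+1$, which is at most two and establishes the claim for the base algorithm. (The root $v_0$ sees only color $1$, a harmless boundary case.)

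The delicate part, and the step I expect to be the main obstacle, is showing that the improvement in Step~5 preserves this invariant rather than splitting the edges incident to a single vertex across three or more colors. Here I would argue that all edges of a fixed color $c$ that are incident to a fixed vertex $v$ pairwise share the endpoint $v$, so they all lie in one connected component of the subgraph induced by the color-$c$ edges; consequently Step~5 recolors the entire family with one and the same fresh color. Applying this observation to each of the (at most) two colors incident to $v$ shows that recoloring merely replaces each color class by a single new color, so the number of distinct colors incident to $v$ never exceeds two. Combining the two parts gives a valid $2$-coloring.
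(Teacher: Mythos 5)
Your proof is correct. The core of your argument --- classifying the edges at a vertex $v$ of BFS level $k$ by the level of the opposite endpoint and concluding that $v$ can only see the colors introduced at iterations $k$ and $k+1$ --- rests on the same structural fact the paper uses, but the paper packages it as a proof by contradiction on the ``border'' of the growing colored subgraph $G_i$ (a third color at a border vertex $p$ would have to come from an edge to the next border, which is forced to carry color $c_{i+1}$), whereas you give a direct, constructive verification per vertex. The two are close in spirit; yours is somewhat more explicit about why edges to same-level neighbors and to the next level both land in the same iteration. Where you genuinely go beyond the paper is Step 5: the published proof does not address the improvement step at all, while you supply the short but necessary observation that all edges of a fixed color incident to a fixed vertex share that vertex and therefore lie in one connected component of that color's subgraph, so the refinement into components recolors them with a single fresh color and cannot raise the number of distinct colors seen at any vertex. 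That closes a gap the paper's proof leaves open, at the cost of a slightly longer argument.
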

\begin{proof}
The colored subgraph $G_i$ grows at each iteration $i$ of the algorithm by adding a new layer of previously uncolored edges. The vertices along the border of $G_i$ all have incident edges with the same color. At step $i+1$ these vertices may obtain a second incident color if they had any uncolored incident edges in the main graph at iteration step $i$. Assume that $G_0=(V,\emptyset)$ and at step $i$ there is a 2-coloring using $i$ colors in $G_i$, but in $G_{i+1}$ we add a third incident color different from $c_{i+1}$ to some vertex $p$ (which has to be at the border of $G_i$). This third color comes from an edge that is incident to both $p$ and a vertex $q$ from the border of $G_{i+1}$. This edge can only be colored with $c_{i+1}$, contradiction. \qed
\end{proof}

\section{Local search algorithms}
\label{section:localsearch}

Min-max edge $q$-coloring (including $2$-coloring) can naturally be modeled as a combinatorial optimization problem:
\begin{itemize}
\item \textit{a solution} $\omega$ is a color mapping from the edge set of the graph to a set of positive integers, for example.
\item \textit{the objective function} $f(\omega)$ used to evaluate the quality of the solution is the largest number of edges that share the same color. Our purpose is to minimize this amount, as such, it is a \textit{minimization problem}.
\item \textit{the constraint} is that the set of edges incident to a vertex can contain edges that are colored with at most $q$ (respectively, two for $2$-coloring) different colors.
\item \textit{a feasible solution} will respect the constraint across all vertices while \textit{an unfeasible solution} will not.

\end{itemize}

To solve this problem using local search, there are a few more requirements to fulfill:
\begin{itemize}
\item \textit{some initial solution} $\omega_0$ to start improving upon.
\item \textit{a neighborhood structure} $N(\omega)$ to provide slightly modified candidate colorings that we will evaluate with our objective function. If a neighbor is better in terms of the objective function then we select it as current solution (i.e. $\omega_{current} \leftarrow \omega_{best} \in N(\omega_{current})$).
\item \textit{some stopping criteria} to prevent the algorithm from looping.

\end{itemize}

For our neighborhood structure we choose \textit{operations based neighborhood}, that is to say, we apply some local modifications or \textit{moves} (i.e. color changes) to some components of the current solution (i.e. edges). The set of moves applied to every component of the solution $\omega$ will construct the neighborhood $N(\omega)$.

\textit{Notation 1.} In the following we refer to the \textit{color class of a vertex} as being the set of colors of its incident edges. We use $cc(v)$ to denote the color class of a vertex $v$. By definition, $cc(v)=\bigcup _{(v,v') \in E} \sigma((v,v'))$, where $\sigma$ is an edge coloring.

We now consider a move set that can be applied only on feasible solutions (i.e. $2$-colorings) and will also produce only feasible solutions.

The defined moves can only be applied in certain cases depending on the color classes of the endpoint vertices of the edge we operate on. Such scenarios are depicted in Figure \ref{fig:moves} but do not reveal all possible cases. The omitted cases are those that result in the removal of a color from either or both of the color classes for \textit{exchange}, \textit{connect} and \textit{create}.

The effect of each move in our defined move set is detailed below:

\begin{enumerate}
\item \textit{Exchange}. Applicable iff the color class of either endpoint is included in the other (or equal) and at least one of the endpoint vertices has a color class of cardinality $2$: change the color of the edge to the other color in the endpoints' color classes.

$\forall e = (v,v')$ if $\{col\} = cc(v) \cup cc(v') \setminus \{e_{color}\}$: $e_{color} \leftarrow col$
\item \textit{Connect}. Applicable iff the color classes of the endpoint vertices are  both of cardinality $2$ and not equal: repaint the edge color, as well as all the edges using the other two colors in the respective endpoints' color classes with a new, unified color. 

$\forall e = (v,v')$ if $\{col_1,col_2\} = cc(v) \cup cc(v') \setminus \{e_{color}\}$:

 $e_{color} \leftarrow col_{new}, \forall e' \in \left( \bigcup \limits_{\sigma(e_1)=col_1} e_1 \right) \cup \left( \bigcup \limits_{\sigma(e_2)=col_2} e_2 \right): e'_{color} \leftarrow col_{new}$

\item \textit{Create}. Applicable iff the endpoints both have color classes of cardinality $1$: assign a new color to the edge.

$\forall e = (v,v')$ if $\emptyset = cc(v) \cup cc(v') \setminus \{e_{color}\}$: $e_{color} \leftarrow col_{new}$
\item \textit{Merge}. Essentially an operation that recolors two neighboring colored components with a new color. For consistency it is defined as operating on an edge like the other moves.

\end{enumerate}

\begin{figure}[hbt]
\begin{center}
\begin{tabularx}{\textwidth}{@{}c  @{}c @{}c  @{}c @{} c@{}}
\begin{tabular}{c}
\includegraphics[width=2.33cm]{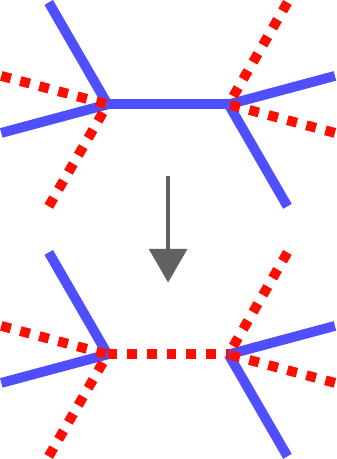}
\end{tabular}
&
 \begin{tabular}{c}
\includegraphics[width=2.33cm]{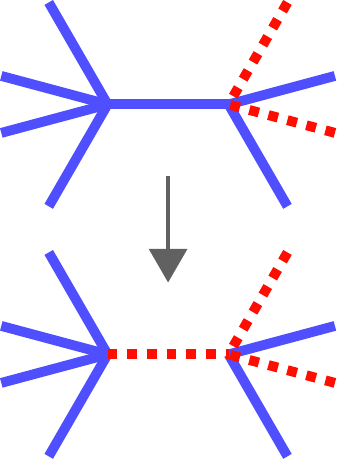}
 \end{tabular}
 &
 \begin{tabular}{c}
\includegraphics[width=2.33cm]{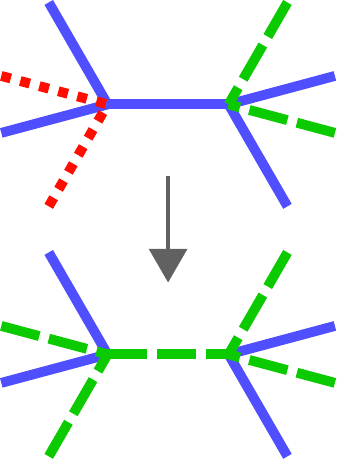}
 \end{tabular}
 &
 \begin{tabular}{c}
\includegraphics[width=2.33cm]{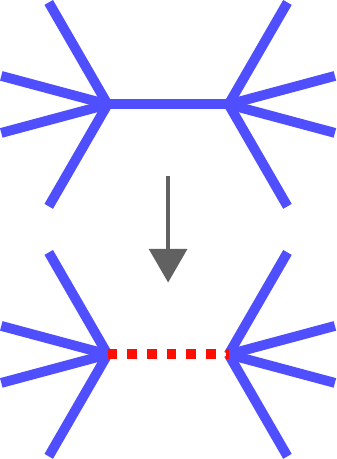}
 \end{tabular}
 &
 \begin{tabular}{c}
\includegraphics[width=2.33cm]{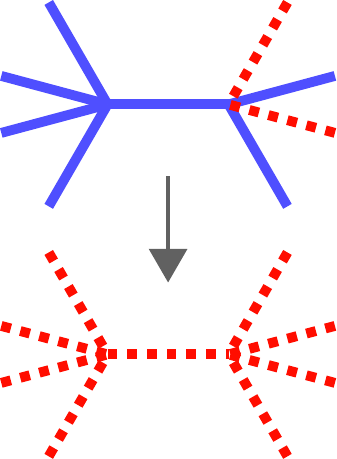}
 \end{tabular}
 \\
 (exchange 1)
 &
 (exchange 2)
 &
 (connect)
 &
 (create)
 &
 (merge) 
 \\
$e_{col} \leftarrow$ $c_{other}$
&
$e_{col} \leftarrow$ $c_{other}$
&
$e_{col} \leftarrow (c_1 \cup c_2)$
&
$e_{col} \leftarrow$ $c_{new}$
&
$e_{col} \cup c_{other}$
 \\
$(\{e_c,c_o\},\{e_c,c_o\})$
&
$(\{e_c\},\{e_c,c_o\})$
&
$(\{e_c,c_1\},\{e_c,c_2\})$
&
$(\{e_c\},\{e_c\})$
&
$(\{e_c\},\{e_c,c_o\})$
 \\
\end{tabularx}
\end{center}
\caption{Illustration of the considered move set in our local search algorithms. The central horizontal edge in all scenarios is the one that considers changing its color ($e_{col}$). The operation may affect the color of edges other than the horizontal one, as in \textit{merge} and \textit{connect}. Here, the $\cup$ operator stands for unifying two colors. The bottom row shows the vertex color classes where the move is applicable.}

\label{fig:moves}

\end{figure}

\begin{theorem}
The move set defined above can only produce $2$-colorings.
\end{theorem}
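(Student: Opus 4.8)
The plan is to reduce all four cases to a single sufficient condition and then discharge that condition move by move. Fix the current feasible coloring $\sigma$, so $|cc(w)| \le 2$ for every vertex $w$, and let $cc'(w)$ denote the color class of $w$ after one move. Each move recolors some set $S$ of edges, giving them all one common \emph{target} color $t$ (for \emph{exchange}, $S=\{e\}$ and $t$ is the already-present color $col$; for \emph{create}, \emph{connect} and \emph{merge}, $t=col_{new}$ is a fresh color). Writing $A_w := \{\sigma(f) : f \ni w,\ f \notin S\}$ for the colors the vertex $w$ retains, one has
\begin{equation*}
cc'(w) = A_w \cup T_w, \qquad T_w = \{t\}\ \text{if } w \text{ meets } S \text{ and } T_w=\emptyset \text{ otherwise}, \qquad A_w \subseteq cc(w).
\end{equation*}
Since $|cc(w)|\le 2$, the class $cc'(w)$ can exceed size two only if $w$ meets $S$, $|A_w|=2$, and $t\notin A_w$. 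Hence the whole statement follows from the sufficient condition: \emph{for every $w$ incident to a recolored edge, either $t\in A_w$ or $|A_w|\le 1$}; vertices not meeting $S$ are unchanged and need no argument.

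First I would dispatch \emph{create} and \emph{exchange}, which recolor only the single edge $e=(v,v')$, so the only vertices meeting $S$ are $v$ and $v'$. Their applicability conditions force $cc(v)\cup cc(v')$ into a two-element set ($\{e_c\}$ for create, $\{e_c,col\}$ for exchange), so $A_v,A_{v'}$ are subsets of that set and a one-line check settles each: for \emph{create} one has $cc(w)=\{e_c\}$, hence $|A_w|\le 1$; for \emph{exchange}, if the retained colors already miss $col=t$ then $A_w\subseteq\{e_c\}$, so again $|A_w|\le 1$.

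Next come \emph{connect} and \emph{merge}, where $S$ consists of whole color classes (connect) or whole monochromatic components (merge). For \emph{connect}, $S=\{e\}\cup\{\text{edges colored }col_1\}\cup\{\text{edges colored }col_2\}$, and any $w$ meeting $S$ is incident to an edge originally colored $col_1$ or $col_2$ — this includes $v,v'$, since $col_1\in cc(v)$ and $col_2\in cc(v')$ — so $cc(w)$ contains $col_1$ or $col_2$ and therefore $A_w\subseteq cc(w)\setminus\{col_1,col_2\}$ has size at most one. The same counting applies to \emph{merge}, once we observe the key structural fact: \emph{all same-colored edges incident to a vertex lie in a single monochromatic component}, so recoloring an entire component strips that color completely from each vertex it touches. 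Thus any $w$ meeting the merged set was incident to one of the two merged colors, that color is removed from $A_w$ in full, and $|A_w|\le 1$ follows exactly as before.

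I expect \emph{merge} to be the main obstacle, not because the arithmetic is hard but because its definition in the text is informal (``recolors two neighboring colored components''); before the uniform bound $A_w\subseteq cc(w)\setminus\{\text{merged colors}\}$ can be quoted, one must pin down precisely which edges belong to $S$ — the two monochromatic components meeting at the shared endpoint — and verify that no vertex retains part of a merged color while losing the rest. The single-component observation above is exactly what legitimizes this step: since a vertex cannot hold two edges of the same color in different components, a recolored component removes its color entirely from every incident vertex, and the remaining three moves are then immediate from the reduction.
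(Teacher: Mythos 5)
Your proof is correct and follows essentially the same route as the paper's: a case analysis over the four moves showing that no vertex's color class can exceed cardinality two after a recoloring. Your version is in fact more careful than the paper's terse argument --- in particular, the uniform reduction via $cc'(w)=A_w\cup T_w$ and the explicit observation that all same-colored edges at a vertex lie in one monochromatic component (which justifies the \emph{connect} and \emph{merge} cases for vertices other than the endpoints of $e$) make rigorous what the paper only asserts.
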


\begin{proof}

All of the moves change the edge color and never add a third color to the edge endpoint vertices' color classes. We can observe that:
\begin{enumerate}
\item \textit{Exchange case 1} does not add a new color to either color class. At most it can remove one from either or both.
\item \textit{Exchange case 2} can at most add a color to a color class of cardinality $1$.
\item \textit{Connect} modifies the colors in the color classes but they remain of cardinality $2$ (or may decrease). Other affected edges maintain their color class cardinality (or may decrease).
\item \textit{Create} adds a color to color classes of cardinality $1$. A color class may remain of cardinality $1$ if the respective endpoint has degree $1$.
\item \textit{Merge} produces color classes of cardinality $1$. Other affected edges maintain their color class cardinality (or may decrease).
\end{enumerate}
Therefore any move applied on a $2$-coloring will produce  a $2$-coloring.
 \qed
\end{proof}

\textit{Notation 2.} We refer to an edge as being \textit{color critical} if by removing this edge from a subgraph containing all of the edges that share its color will result in the number of connected components increasing in that subgraph.

In our algorithms, recoloring of a color critical edge will cause one of the resulting connected components to be colored with a new color.

Suppose that we have a solution $\omega$ and we operate on an edge which is colored with the most frequent color. Then, the moves defined above will affect $f(\omega)$, our objective function, in the following way:
\begin{enumerate}
\item \textit{Exchange} will produce $f(\omega') \leftarrow f(\omega) - 1$ if the edge is not color critical. Otherwise the objective function can decrease by more than $1$. However, if the other color present in the endpoints' color classes has the same frequency as the one on the edge we operate on, then $f(\omega') \leftarrow f(\omega) + 1$. 
\item \textit{Connect} will produce $f(\omega') \leftarrow f(\omega) - 1$ if the edge is not color critical. Otherwise the objective function can decrease by more than $1$.  However, if the sum of the frequencies of the other two colors in the endpoints' color classes is equal to or exceeds that of the edge we operate on, then $f(\omega')$ will increase by $1$ or more. 
\item \textit{Create} will produce $f(\omega') \leftarrow f(\omega) - 1$ if the edge is not color critical. Otherwise the objective function can decrease by more than $1$.
\item \textit{Merge} will affect $f(\omega')$ if the number of colored edges that use the new color exceeds the previous objective function value. The new $f(\omega')$ can be no more than twice $f(\omega)$, just like with the \textit{connect} case.
\end{enumerate}

The cases when the moves affect a color critical edge of the most frequent color are tricky: to properly calculate the impact on the objective function one needs to perform for example a depth first search across all the neighboring edges to update the objective function. When we want to explore the entire neighborhood of a solution this becomes computationally expensive as we need to perform depth first searches for all edges in $O(|E|(|V|+|E|)$ for all iterations of our local search algorithms.

To avoid this, one can use probabilistic sampling of the neighborhood. In our implementations we prefer to discard this computation entirely, as it is certain that the objective function is decreased by at least $1$ with all of the moves if we are careful about avoiding the special cases that worsen the value.

However, if the objective function can only decrease by $1$ in all cases, there will not be sufficient information to drive the search to good solutions. As such, we use an auxiliary objective function in terms of defining an attractiveness value for each of the moves.

\textit{Notation 3.} In the following we denote $count(c)$ to represent the number of edges that are colored with color $c \in C$. Formally, $count(c) = |\{e \in E | \sigma(e)=c\}|$.

Next, we define the attractiveness for each move (which must be \textit{maximized}):
\begin{enumerate}
\item $att_{exchange}(e,\omega) = b_1 + w_1  \cdot count(e_{color}) \cdot \frac{f(\omega)-count(col_{other})}{f(\omega)}$
\item $att_{connect}(e,\omega) = b_2 +  w_2  \cdot count(e_{color}) \cdot \frac{f(\omega)-count(col_1)-count(col_2)-1}{f(\omega)}$
\item $att_{create}(e,\omega) = b_3 + w_3  \cdot count(e_{color}) \cdot \frac{1}{f(\omega)}$
\item $att_{merge}(e,\omega) = b_4 +  w_4  \cdot  \frac{f(\omega)-count(e_{color})-count(col_{other})}{f(\omega)}$
\end{enumerate}

The constants $b_i$, $w_i$ are used for the fine tuning of the attractiveness values. Observe that for \textit{connect} and \textit{merge} the fraction part of the attractiveness will be $0$ when the newly colored components reach exactly the size of $f(\omega)$ and negative if they exceed $f(\omega)$ (and thus worsen the new objective function value).

With all of the above we have all the elements required to build a simple hill climbing  algorithm to approximate min-max $2$-coloring by choosing the most attractive move at each iteration. We may improve upon this algorithm by using metaheuristics for local search such as simulated annealing and tabu search.

\subsection{Simulated annealing algorithm (Algorithm \ref{alg:annealing}).}
\label{section:sa}

\begin{algorithm}[!hbt]
\caption{input: graph $G=(V,E)$}
\label{alg:annealing}
$1:$ Let the working solution $\omega$ to be some initial $2$-coloring.\\
$2:$ Set up some initial starting temperature of the annealing system: $temp \leftarrow temp_{initial}$\\
$3:$ Initialize $e_{chosen} \leftarrow nil$, $att_{chosen} \leftarrow 0$\\
$4:$ Cycle through edges $e \in E$ :
\begin{itemize}
\item (accept improving moves always:)

if $att(e)>att_{chosen}$: $e_{chosen} \leftarrow e$, $att_{chosen} \leftarrow att(e)$
\item (accept worsening moves with temperature dependent probability:)

else if $uniform(0,1) < exp(\frac{att(e) - att_{chosen}}{temp})$: $e_{chosen} \leftarrow e$, $att_{chosen} \leftarrow att(e)$
\end{itemize}
$5:$ Perform the move on the working solution: $\omega \leftarrow move(e_{chosen},\omega)$\\
$6:$ $temp \leftarrow temperature\_decrease\_schedule(temp)$\\
$7:$ Evaluate stopping criteria. If one of the criteria is met terminate the algorithm and output $\omega$.
Otherwise, continue with step 3.\\
\end{algorithm}

In the simulated annealing setup we select an initial temperature for our system and we may accept worsening moves to our working solution with a probability $p$. This probability is affected by the temperature at a particular iteration step of the algorithm and by the loss in move attractiveness of our worsening operation. Lowering temperature causes $p$ to decrease, while moves with low attractiveness also cause a small probability of acceptance.

\subsection{Tabu search algorithm (Algorithm \ref{alg:tabu}).}
\label{section:ts}

\begin{algorithm}[!hbt]
\caption{input: graph $G=(V,E)$}
\label{alg:tabu}
$1:$ Let the working solution $\omega$ to be some initial $2$-coloring.\\
$2:$ Set up the frequency list to contain $0$ for all edges, set up $TabuList \leftarrow nil$.\\
$3:$ Initialize $e_{chosen} \leftarrow nil$, $att_{chosen} \leftarrow 0$\\
$4:$ Cycle through edges $e \in E, e \notin TabuList $ :

if $att(e) - k \cdot frequency(e)>att_{chosen}$: $e_{chosen} \leftarrow e$, $att_{chosen} \leftarrow att(e)- k \cdot frequency(e)$

$5:$ Perform the move on the working solution: $\omega \leftarrow move(e_{chosen},\omega)$\\
$6:$ $TabuList \leftarrow TabuList \cup \{e\}$, $frequency(e) \leftarrow frequency(e) + 1$ \\
$7:$ Evaluate stopping criteria. If one of the criteria is met terminate the algorithm and output $\omega$.
Otherwise, continue with step 3.\\
\end{algorithm}

To explore  the neighborhood of a solution in a more intelligent way we can employ memory to prevent cycling and drive the search to less explored areas of the search space.
To solve the problem using tabu search techniques we use:
\begin{itemize}
\item a simple tabu list providing short-term memory that disallows a move on any edge recently changed;
\item a frequency list on edge moves providing long-term memory. The frequency of an edge $e$ increases by $1$ each time it is used in an \textit{exchange} operation, and move attractiveness values receive a penalty of $-k \cdot frequency(e)$ for some selected constant $k$.
\end{itemize}

\begin{theorem}
Algorithms \ref{alg:annealing} and \ref{alg:tabu} produce valid $2$-colorings.
\end{theorem}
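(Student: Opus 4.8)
The plan is to argue by induction on the number of iterations performed by each algorithm, leveraging the earlier theorem which states that the move set can only produce $2$-colorings. First I would observe that both algorithms are structurally identical with respect to feasibility: step~1 of each initializes the working solution $\omega$ to be some $2$-coloring, and each iteration modifies $\omega$ only through a single invocation of $move(e_{chosen},\omega)$, which applies one of the four moves (\emph{exchange}, \emph{connect}, \emph{create}, \emph{merge}) from the defined move set. Feasibility therefore reduces entirely to tracking this single invariant across iterations.

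For the base case, the initial solution is a valid $2$-coloring by construction (step~1 in both algorithms). For the inductive step, suppose the working solution $\omega$ at the start of some iteration is a $2$-coloring. The algorithm selects an edge $e_{chosen}$ and performs $\omega' \leftarrow move(e_{chosen},\omega)$. Since $move$ applies a member of the move set to a $2$-coloring, the earlier theorem guarantees that $\omega'$ is again a $2$-coloring. Hence the invariant ``$\omega$ is a $2$-coloring'' is preserved by every iteration, and in particular the final output of each algorithm is a valid $2$-coloring.

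The key point — and the only place where these two algorithms differ from a plain hill-climbing procedure — is the move-selection rule, so I would emphasize that this rule is irrelevant to feasibility. In simulated annealing (step~4 of Algorithm~\ref{alg:annealing}) a worsening move may be accepted with temperature-dependent probability $\exp\!\left(\frac{att(e)-att_{chosen}}{temp}\right)$; in tabu search (step~4 of Algorithm~\ref{alg:tabu}) edges on the $TabuList$ are excluded from consideration and the attractiveness is penalized by $-k\cdot frequency(e)$. Both mechanisms affect only \emph{which} edge becomes $e_{chosen}$, never the repertoire of operations that $move$ may apply; whatever edge is ultimately selected, the operation performed is still one of the four valid moves. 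Thus the main obstacle is not a hard technical step but rather making explicit that the acceptance probabilities and the tabu bookkeeping never introduce an operation outside the move set, so that the closure property of the move set under a single move (already established) suffices to conclude the claim.
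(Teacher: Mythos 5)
Your proof is correct and follows essentially the same approach as the paper's: the paper's one-line argument is precisely that the algorithms start from a feasible solution and apply only moves from a set already shown to preserve $2$-colorings. Your version merely makes the induction and the irrelevance of the move-selection rule explicit, which is a faithful elaboration rather than a different route.
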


\begin{proof} The algorithms take a feasible solution and apply a move set that only results in feasible solutions ($2$-colorings).  \qed
\end{proof}

\section{Experimental results}
\label{section:experiments}

\subsection{Testing dataset details}

Our testing dataset includes computer generated Unit Disk Graphs and Quasi-Unit Disk Graphs (prefixed with ``udg'' and ``qudg'', respectively in Table \ref{table:localsearch}) that are traditionally used to model Wireless Mesh Networks: two nodes can communicate only if they are within transmission range of each other. 

In the testing setup, these two aforementioned types of graphs were generated by deploying 100, 500 and 1000 vertices with uniformly distributed coordinates over a square with the side measuring 2500 units. The maximum transmission range parameter is specified as a suffix (e.g. \textsl{udg500.140} is a Unit Disk Graph with 500 vertices and transmission range 140). The algorithms were tested on the largest connected component in each graph. The vertex count, edge count and maximum degree of the test graphs are presented in Table \ref{table:localsearch}. 
The transmission range for the Quasi-Unit Disk Graphs varies uniformly between 50\% and 100\% of the maximum specified range. They are generated with the same random seed as their UDG counterparts so that the layout is identical excepting the absence of some edges from the qUDG cases.

The rest of our testing dataset contains graphs that are not themselves modelling wireless networks, for the sake of a more thorough analysis. These graphs are a part of the dataset for the DIMACS graph vertex coloring benchmarks and their high connectivity proves to be quite a challenge for our local search edge-coloring algorithms.  Note that the following graphs featured in our experimental result showcase are geometric graphs, which are more relevant to the Wireless Mesh Network topology: \textsl{dsjr500.1c}, \textsl{dsjr500.5}, \textsl{r250.5}, \textsl{r1000.1c}.

\subsection{ Algorithms \ref{alg:annealing} and \ref{alg:tabu}  implementation details.}

Our implementations are based on the JGraphT Java Graph Library and are made publicly available  by means of a GitHub repository \cite{JavaMinMax}.

In all our local search algorithms we employ a disjoint set forest structure to keep track of colors when we use the \textit{merge} and \textit{connect} moves. The \textit{create} move draws a new color by incrementing a static counter. Every so often, we renumber the colors because all moves except \textit{create} can cause colors to disappear from the coloring. The vertex color classes are maintained inside a hash map structure. After every iteration it is necessary to perform a depth first search to recolor a potential new connected component that becomes disconnected when an edge changes color. Every iteration takes total time $O(|E|\alpha(|V|))$.

In our \textit{simulated annealing  algorithm} we have selected for our cooling schedule the exponential cooling scheme $T' \leftarrow k T$, with $k < 1$, close to $1$, as first proposed by Kirkpatrik et al. \cite{kirkpatrick1983optimization}.

In our \textit{tabu search algorithm} we use hash map structures to keep track of the tabu moves and quickly determine if a move is tabu or not.

\vspace{-9pt}

\subsection{Running data}

Our experiments for the min-max 2-coloring approximation algorithms are performed on a selection of Unit Disk Graphs with increasing vertex density and transmission range, as well as on DIMACS benchmark graphs. 

\begin{figure}[!h]
		\includegraphics[width=0.99\textwidth]{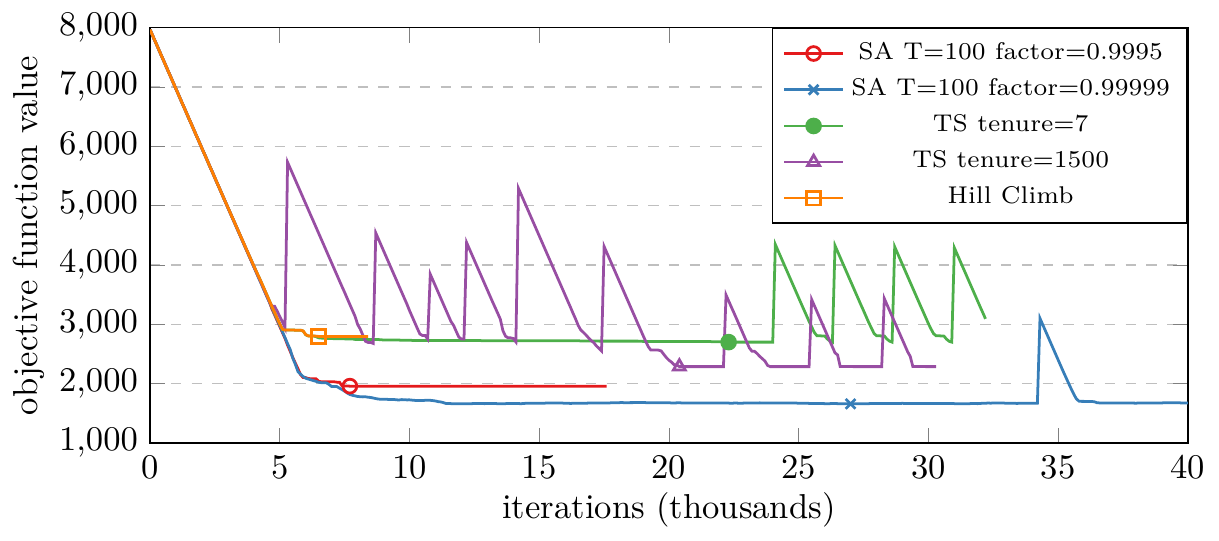}
		\caption{A plot illustrating the quality of the incumbent solution at each iteration of the algorithms on a selected graph (7968 edges). Marks indicate best achieved solutions.} \label{fig:plot}
	\end{figure}

To compare the local search algorithms in terms of the quality of the produced solution the time required to obtain it, we plot the value of the objective function for the current solution at each iteration step in Figure \ref{fig:plot}. To compute the values, we start with an initial solution containing a single colored component and select graph \textsl{qudg1000.220} and the best combination of parameters we have discovered.
The plot makes it easy to observe the linear drop in the objective function for $5000$ units and between iterations $1$ and $5000$. This is a strong point of local search techniques as they exploit the gradient in the search space. Their weakness is that once they reach a local optimum it is hard to escape it as there are no more improving moves to be considered. Simulated annealing approaches this problem by adding randomness to the moves that are selected and we can see the result in the quality of the found solutions. Tabu search will run out of improving moves and will attempt worsening ones to escape the local optimum. The spikes in the objective value function correspond to applying the \textit{merge} move which sometimes almost doubles the last best value. 

For our local search algorithms we choose the initial solution to be either the solution given by Algorithm \ref{alg:bfs} or a single colored component. We stop the algorithms when they fail to produce an improvement for a set number of iterations.

We compare the algorithms with solutions obtained by running an ILP solver (Gurobi) on the the linear program formulation from \cite{larjomaathesis}. To obtain these solutions we had to limit the running time of the solver (3 hours) and the maximum number of allowed colors in the linear program (which in turn decreases the number of variables). As such, the obtained linear program solutions are not the optimum solution and instead are an upper bound for each min-max 2-coloring on the respective graph.

Finally, we present the running data for our algorithms on a selection of graphs in Table \ref{table:localsearch}. The results are encouraging for Unit Disk Graphs and their variants. Our tabu search heuristic applied to the solution of the BFS algorithm seems to consistently yield good results by improving (decreasing) the objective by up to 37\% (21\% on average). For some graphs, the BFS-inspired algorithm seems to create a harder to escape local optimum for the local search heuristic algorithms. This is where simulated annealing produces the best results starting from a blank (single color) initial solution.

\begin{table}[t]
\caption{
Algorithm running data. The first four columns display graph name, vertex count, edge count and maximum vertex degree. The next three columns represent the objective function value obtained by executing hill climbing, simulated annealing and tabu search with a blank (single color) initial solution. The following column is the solution for the BFS-based algorithm and then the solutions for the local search algorithms now starting with it as an initial solution. The last column gives an upper bound for the value of the optimum solution. The best solutions are highlighted. \vspace{-3mm}}
\label{table:localsearch}
\begin{center}
\begin{tabularx}{\textwidth}{ | l |  rRr | RRR | RRRR| r |}
\hline
Graph & $|V|$ & $|E|$ & Deg & HC & SA & TS & BFS & HC$'$ & SA$'$ & TS$'$ & ILP\\
\hline
udg100.400 & 100 & 347 & 12 & 46 & 40 & 32 & 33 & 30 & \textbf{27} & \textbf{27} & 22\\
udg100.600 & 100 & 694 & 22 & 177 & 156 & 131 & 140 & 124 & \textbf{113} & \textbf{113} & 86\\
qudg100.400 & 100 & 232 & 9 & 29 & 19 & 19 & 21 & 19 & \textbf{17} & \textbf{17} & 12\\
qudg100.600 & 100 & 525 & 18 & 104 & 80 & 79 & 88 & 82 & 66 & \textbf{64} & 56\\
udg500.140 & 357 & 893 & 12 & 36 & 33 & 35 & 35 & 32 & \textbf{28} & \textbf{28} & 23\\
udg500.180 & 499 & 1862 & 16 & 198 & 156 & 120 & 77 & 70 & 59 & \textbf{54} & 63\\
udg500.220 & 500 & 2776 & 22 & 840 & 402 & 834 & 195 & 190 & 190 & \textbf{148} & 127\\
udg1000.140 & 1000 & 4641 & 20 & 359 & 173 & 183 & 163 & 141 & 133 & \textbf{102} & -\\
udg1000.180 & 1000 & 7592 & 28 & 2218 & 1073 & 2218 & 579 & 579 & 579 & \textbf{478} & -\\
udg1000.220 & 1000 & 11152 & 39 & 4132 & 3443 & 3854 & \textbf{1058} & \textbf{1058} & \textbf{1058} & \textbf{1058} & -\\
qudg500.140 & 108 & 198 & 9 & 17 & \textbf{12} & 16 & 15 & \textbf{12} & \textbf{12} & \textbf{12} & 10\\
qudg500.180 & 480 & 1281 & 13 & 66 & 52 & 45 & 48 & 34 & 38 & \textbf{31} & 29\\
qudg500.220 & 500 & 1965 & 17 & 251 & 219 & 100 & 90 & 79 & 67 & \textbf{64} & -\\
qudg1000.140 & 998 & 3305 & 14 & 120 & 135 & 78 & 78 & 65 & 60 & \textbf{56} & -\\
qudg1000.180 & 1000 & 5427 & 21 & 614 & 542 & 586 & 295 & 255 & 220 & \textbf{215} & -\\
qudg1000.220 & 1000 & 7968 & 31 & 2601 & 1727 & 2601 & 623 & 619 & 619 & \textbf{614} & -\\
dsjc250.5 & 250 & 15668 & 147 & 7834 & \textbf{7182} & 7193 & 10148 & 7625 & 7625 & 7625 & 5234\\
dsjc500.1 & 500 & 12458 & 68 & 6123 & \textbf{5558} & 6123 & 9162 & 6084 & 6084 & 6084 & 5824\\
dsjc500.5 & 500 & 62624 & 286 & 31115 & \textbf{30089} & 31115 & 42946 & 30766 & 30766 & 33946 & -\\
dsjr500.5 & 500 & 58862 & 388 & 29326 & \textbf{28653} & 29326 & 28724 & 28704 & 28704 & 28704 & -\\
flat300\_28\_0 & 300 & 21695 & 162 & 10586 & 10781 & 10586 & 14604 & \textbf{10551} & \textbf{10551} & \textbf{10551} & -\\
le450\_25c & 450 & 17343 & 179 & 8214 & 8416 & 8214 & 8614 & 8614 & 7549 & \textbf{7286} & 5781\\
le450\_25d & 450 & 17425 & 157 & 8339 & 7763 & 8339 & 8667 & 8667 & 7484 & \textbf{7154} & 5952\\
r250.5 & 250 & 14849 & 191 & 7425 & 6530 & 6806 & 7321 & 7321 & \textbf{5813} & \textbf{5813} & 4950\\
\hline
\end{tabularx}
\end{center}

\vspace{-11pt}
\end{table}

\section{Conclusions and future work}

The newly designed algorithms for the 2-coloring min-max problem offer a practical method of obtaining good solutions without resorting to more time consuming exact methods. 

More techniques to approach the problem may be used, such as recombination heuristics. An idea is to attempt to find some coding for graphs with colored edges suitable for solving $2$-coloring by using a genetic algorithm approach.

It would be interesting to find a constant factor approximation algorithm for min-max edge $q$-coloring.

\bibliographystyle{plain}
\bibliography{bibliography}

\end{document}